\newcommand{\ddd}{\scriptstyle{\textnormal{d}}}
\newcommand{\dd}{\textnormal{d}}
\newcommand{\grad}{\textnormal{grad}}
\newcommand{\GL}{\textnormal{GL}}
\newcommand{\SO}{\textnormal{SO}}
\newcommand{\SE}{\textnormal{SE}}
\newcommand{\fcdot}{\; \cdot\; }
\newcommand{\diff}{\mathrm{d}}
\newcommand{\R}{\mathbb{R}}
\newcommand{\Lt}{\mathrm{L}}
\newcommand{\Rt}{\mathrm{R}}
\newcommand{\Tt}{\mathrm{T}}
\DeclareMathOperator{\dist}{dist}
\DeclareMathOperator{\Log}{Log}
\DeclareMathOperator{\Exp}{Exp}
\DeclareMathOperator*{\argmin}{arg\,min}
\begin{document}
\title{Bi-invariant Geodesic Regression\\with Data from the Osteoarthritis Initiative}

\titlerunning{Bi-invariant Geodesic Regression}

\author{Johannes Schade\inst{1, 3}\orcidID{0009-0002-6207-4059} \and
Christoph von Tycowicz\inst{3}\orcidID{0000-0002-7120-4081} \and
Martin Hanik\inst{2, 3}\orcidID{0000-0002-1447-4069}}
\authorrunning{J. Schade et al.}
%
\institute{{Freie Universität Berlin, Berlin, Germany} \and
{Technical University Berlin, Berlin, Germany} \and
Zuse Institute Berlin, Berlin, Germany \\
\email{\{johannes.schade,vontycowicz,hanik\}@zib.de}}

\maketitle              
\begin{abstract}
Many phenomena are naturally characterized by measuring continuous transformations such as shape changes in medicine or articulated systems in robotics.
Modeling the variability in such datasets requires performing statistics on Lie groups, that is, manifolds carrying an additional group structure.
As the Lie group captures the symmetries in the data, it is essential from a theoretical and practical perspective to ask for statistical methods that respect these symmetries; this way they are insensitive to confounding effects, e.g., due to the choice of reference coordinate systems. 
In this work, we investigate geodesic regression---a generalization of linear regression originally derived for Riemannian manifolds.
While Lie groups can be endowed with Riemannian metrics, these are generally incompatible with the group structure.
We develop a non-metric estimator using an affine connection setting. It captures geodesic relationships respecting the symmetries given by left and right translations.
For its computation, we propose an efficient fixed point algorithm requiring simple differential expressions that can be calculated through automatic differentiation.
We perform experiments on a synthetic example and evaluate our method on an open-access, clinical dataset studying knee joint configurations under the progression of osteoarthritis.

\keywords{Geometric statistics  \and Lie group \and Affine connection space.}
\end{abstract}
\section{Introduction}

We begin with the essential mathematical background before introducing the subject of this work: regression for geometric data using geodesic models.
A discussion of related work and our contributions is deferred to the end of this introduction.

\subsubsection{Basics of Lie groups.}
    We briefly review Lie group theory; refer to~\cite{Postnikov2013} for more detail. In the following, ``smooth'' means ``infinitely differentiable''.
    
    A Lie group $G$ is a smooth manifold that has a compatible group structure; that is, it comes with a smooth, associative (often not commutative) group operation $G \times G \ni (g,h) \mapsto gh \in G$ with identity element $e$ and a smooth inversion map $G \ni g \mapsto g^{-1}$. 
    
    The group operation defines two automorphisms for each $g$ in a general Lie group $G$: the left and right translation $\Lt_h: g \mapsto hg$ and $\Rt_h: g \mapsto gh$. Their derivatives $d_g\Lt_h$ and $d_g\Rt_h$ map tangent vectors $v \in T_gG$ bijectively to the tangent spaces 
    $T_{hg}G$ and $T_{gh}G$ at $hg$ and $gh$, respectively. 
    
    Each $v \in T_eG$ defines a left-invariant vector field $X$ by $X_g = \dd_e\Lt_g(v)$ for all $g \in G$. Right invariant vector fields are constructed analogously. 
    The integral curve $\alpha_v: \mathbb{R} \to G$ of a (left or right) invariant vector field $X$ with $v = X_e$ determines a 1-parameter subgroup of $G$ through $e$. The \textit{group exponential} $\exp$ is then defined by $\exp(v):= \alpha_v(1)$; being a diffeomorphism around $e$, it has a local inverse: the \textit{group logarithm} log.
    
    Given two vector fields $X, Y$ on $G$, a so-called connection $\nabla$ enables differentiation of $Y$ along $X$, yielding a vector field $\nabla_X Y$. With $\gamma' := \frac{\ddd \gamma}{\ddd t}$, a \textit{geodesic} $\gamma: [0,1] \to G$ is defined by $\nabla_{\gamma'} \gamma' = 0$, creating a curve without acceleration. Every $f \in G$ has a so-called normal convex neighborhood $U$, in which any two points $f,h \in U$ can be connected by a unique geodesic $[0,1] \ni t \mapsto \gamma(t;f,h)$ that lies within $U$. With $v:= \gamma'(0;g,h)$, the \textit{connection exponential} $\Exp_g: T_gG \to G$ at $g$ is defined by $\Exp_g(v):= \gamma(1;g,h)$; it is also a local diffeomorphism with the \textit{connection logarithm} $\Log_g(h) = \gamma'(0;g,h)$ as its inverse. These maps are called \textit{Riemannian exponential} and \textit{logarithm}, respectively, if $\nabla$ is the Levi-Civita connection of a Riemannian metric.

    Translated group and Riemannian maps coincide if a bi-invariant Riemannian metric (that is, one that remains invariant under both left and right translations) is used. Unfortunately, many Lie groups do not allow for such a metric~\cite{MiolanePennec2015}.
    If, on the other hand, $\nabla$ is the canonical Cartan-Schouten (CCS) connection~\cite{CartanSchouten1926,Postnikov2013}, then geodesics and left (or right) translated 1-parameter subgroups \textit{always} coincide. 
    More precisely, there is a maximal normal convex neighborhood $U$ for every $g \in G$ such that~\cite[Cor.\ 5.1]{PennecLorenzi2020}
    \begin{align} 
            \Exp_g(v) &= g \exp \left(\dd_g \Lt_{g^{-1}}(v) \right) = \exp \left(\dd_g \Rt_{g^{-1}}(v) \right) g, \quad v \in T_gG,\nonumber \\
            \Log_g(f) &= \dd_e\Lt_g \big(\log(g^{-1}f)\big) = \dd_e\Rt_g \big(\log(fg^{-1})\big), \quad f \in U. \label{eq:Log}
        \end{align}
    Geodesics in $U$ are then of the form
    \begin{equation} \label{eq:geodesic}
            t \mapsto \gamma(t; g, f) = \Exp_g(t \Log_g(f)) = g \exp(t\log(g^{-1}f)) = \exp(t\log(fg^{-1}))g.
    \end{equation}
    This fact makes the CCS connection the canonical connection on $G$.

    From the above, we have two equivalent ways to parametrize geodesics in $U$: We can use boundary values $(g, f)$ or initial values $(g, v)$, which are connected by $(g, f) \leftrightarrow (g, \Log_g(f))$. We will use the former convention in this work, as its inherent symmetry simplifies, e.g., the implementation of derivatives of geodesics with respect to their parameters. Particularly, estimators of regression problems will be defined by the start and endpoint of the optimal geodesic.

    \subsubsection{Riemannian geodesic regression.} 
    Geodesic regression was proposed for Riemannian manifolds by Fletcher in~\cite{Fletcher2013}. We now summarize the approach for Lie groups, adapting our parametrization of geodesics with initial values for consistency of exposition.
    
    For Riemannian geodesic regression, the Lie group $G$ must have a Riemannian metric---a smoothly varying inner product on the tangent spaces. The metric determines the corresponding distance function $\dist$ and Levi-Civita connection of $G$; the latter defines the geodesics.
    
    For a $G$-valued random variable $Y$ and a non-random, real-valued variable $t$, the generalization of the linear regression model to the manifold setting is the geodesic model
    \begin{equation} \label{eq:model}
        Y = \Exp_{\gamma(t; g_0, g_1)}(\epsilon),
    \end{equation}
    where $\epsilon$ is a random variable taking values in the tangent space at $\gamma(t; g_0, g_1)$. The points $g_0, g_1 \in G$ are the model's parameters that must be estimated. 
    
    Let $U \subset G$ be a normal convex neighborhood. Without loss of generality, we assume in this work that the variable $t$ takes values in the interval $[0,1]$. Given data $(f_1,t_1), \dots,(f_N, t_N) \in U \times [0,1]$, the \textit{least-squares estimator} $\vartheta^{LS}: (U \times [0,1])^N \to U^2$ is the minimizer of the \textit{sum-of-squared error}
    \begin{equation} \label{def:least_squares_estimator}
        \vartheta^{LS} \left( (f_i, t_i)_{i=1}^N \right) = \argmin_{(g_0, g_1) \in U^2} E(g_0,g_1) := \argmin_{(g_0, g_1) \in U^2} \frac{1}{2}\sum_{i=1}^N \dist^2\big(\gamma(t_i; g_0, g_1), f_i \big).
    \end{equation}
    It can be computed using gradient descent. 
    Denoting the adjoint operator with respect to the Riemannian metric
    by ``$*$'', the gradients of $E$ with respect to $g_0$ and $g_1$ are given by~\cite[Section\ 3]{Hanik_ea2020}
    \begin{align}
        \grad_{g_0} E(\fcdot, g_1) &= - \sum_{i=1}^N \big(\dd_{g_0} \gamma(t_i; \fcdot, g_1) \big)^* \Log_{\gamma(t_i)}(f_i) \qquad\textnormal{and} \label{eq:gradient_p}\\
        \grad_{g_1} E(g_0, \fcdot) &=
        - \sum_{i=1}^N \big(\dd_{g_1} \gamma(t_i; g_0, \fcdot) \big)^* \Log_{\gamma(t_i)}(f_i), \label{eq:gradient_v}
    \end{align}
    respectively.\footnote{The adjoint operators in \eqref{eq:gradient_p} and \eqref{eq:gradient_v} are given in terms of so-called Jacobi fields; see~\cite[Section\ 5.1]{Hanik2023} for information on when and how they can be computed explicitly.} Thus, the desired optima in~(\ref{def:least_squares_estimator}) are roots to both \eqref{eq:gradient_p} and \eqref{eq:gradient_v}. 

    \subsubsection{The problem with Riemannian geodesic regression in Lie groups.}
    A natural property we can expect from an estimator $\vartheta: (U \times [0,1]
    )^N \to U^2$ for \eqref{eq:model} is that it is equivariant under joint translations of the data; that is, for every $h \in G$,
    \begin{align*}
        \vartheta \left( \big(\Tt_h(f_i), t_i \big)_{i=1}^N \right) &= \Tt_h \Big(\vartheta \left((f_i, t_i)_{i=1}^N \right) \Big), \qquad \Tt_h \in \{ \Lt_h, \Rt_h \}.
    \end{align*}
    Indeed, translations are the symmetries of $G$, and, intuitively, the solution to a symmetry-transformed version of the regression problem should be the symmetry-transformed original solution. 

    The following example shows that equivariance under translations trivially holds for ordinary linear regression.
    \begin{example}
        Let $M = \mathbb{R}^n$ with the Euclidean metric. It is also a Lie group with vector addition as the group operation. We have $\Lt_z(x) = \Rt_z(x) = x + z$, so the Euclidean metric is bi-invariant. Consequently, the CCS connection coincides with the Levi-Civita connection, which is just the ordinary directional derivative. With the identification $T_x\mathbb{R}^n \cong \mathbb{R}^n$, we have $\Exp_x(v) = x + v$, implying that geodesics are straight lines.
        With this, we see that \eqref{def:least_squares_estimator}) is the standard least-squares estimator of linear regression, which is equivariant under translations.
    \end{example} 
    The equivariance of the least-squares estimator for linear regression has fundamental consequences in applications. There is, e.g., the helpful effect that an experimenter does not need to consider the placement of the origin of the coordinate system; if it were not equivariant, then the quality of the result would depend on it.

    Because of the raised points, it is a major drawback that, whenever $G$ does not possess a bi-invariant metric, the Riemannian least-squares estimator~(\ref{def:least_squares_estimator}) is not equivariant under all joint translations of the data. 
    Indeed, for such $G$ either left or right translations do not preserve distances, leading to not-only-translated least-squares estimators. (See Section~\ref{subsec:simulation_study} for an example.)
    Two important examples of Lie groups without a bi-invariant metric are the general linear group $\GL(n)$ and the special Euclidean group $\SE(n)$ including the group of rigid-body motions. There are many applications where data from these groups is considered~\cite{AmbellanZachowvonTycowicz2019_GL3,Boisvert_ea2008,Cesic_ea2016,vemulapalli2014human,WU2005981}.
    In the next section, we thus propose an alternative estimator that does not suffer from this issue.

    \subsubsection{Related work and contribution}

    Regression for geometric data has gained increasing attention in recent years, driven by significant empirical improvements~\cite{adouani2024regression,Cornea_ea2017,Fletcher2013,giovanis2020data,hanik2023sundials,hanik2024casteljau} that have been obtained using intrinsic approaches for such data.
    Most approaches in this line of work employ the Riemannian framework to generalize regression methods from multivariate statistics. Beyond geodesic regression introduced by Fletcher~\cite{Fletcher2013}, there are Riemannian doppelgangers for non-parametric regression~\cite{davis2010regression,mallasto2018wrapped,Steinke_nonparameric} as well as higher-order models including polynomial~\cite{hinkle2012polynomial} and spline~\cite{adouani2024regression,Hanik_ea2020} approaches.
    However, all methods suffer from the discussed problem when data from Lie groups without a bi-invariant metric is considered.

    An incompatibility of the group and Riemannian structure was also recognized in other works, where the authors investigate more general notions of metrics. One option is to consider bi-invariant pseudo-Riemannian metrics~\cite{woods2003}; these are again only available for special cases~\cite{MiolanePennec2015}.
    Another option poses Cartan-Schouten metrics~\cite{Diatta2024}, that is, (pseudo-)Riemannian metrics whose Levi-Civita connection agrees with CCS connection, albeit at the possible expense of invariance.    
    For both options, the potential for generalizing statistical procedures under the lack of invariance or positive definiteness remains to be explored.

    Beyond the Riemannian statistical framework, a promising line of work follows a more general approach based on an affine connection structure~\cite{PennecLorenzi2020}.
    With such a non-metric structure, statistical estimators cannot be defined by optimizing residuals or variance as in Riemannian manifolds.
    However, one may obtain generalizations in terms of weaker notions.
    In particular, Pennec and Arsigny~\cite{PennecArsigny2013} characterized the mean as exponential barycenters (the critical points of the variance in the Riemannian case) allowing in turn for bi-invariant notions of higher order moments and Mahalanobis distance.
    Building on these, Hanik et al.~\cite{Hanik2022} derived dissimilarity measures for sample distributions allowing for bi-invariant two-sample tests. 
    

    In this work, we generalize linear regression by modeling the relationships between a Lie group-valued response and a scalar explanatory variable via the geodesics of the CCS connection.
    Employing the affine connection structure of the group, we thus obtain a geodesic estimator that is completely consistent with left and right translation, while reproducing the well-known least-squares approach for the Euclidean case.
    To the best of our knowledge, this is the first bi-invariant estimator for geodesic regression.
    We further propose an efficient numerical scheme for estimation that exploits a reformulation using Jacobi fields computable in a single pass of automatic differentiation.
    The source code of our prototype implementation will be published as part of the open-source library \texttt{Morphomatics}~\cite{Morphomatics}.
    
\section{Bi-invariant Geodesic Regression}
From now on, we consider $G$ endowed with the CCS connection. The geodesics in $G$ are thus given by~\eqref{eq:geodesic}.
In the following, we will need the fact that the start point map $g \mapsto \gamma(t; g, f)$ is differentiable in $U$ with an invertible derivative; that is, the inverse of the partial derivative
$\dd_g \gamma(t; \fcdot, f): T_gG \to T_{\gamma(t; g, f)}G$
exists for all $t \in [0,1]$ and $f \in U$. Because of the symmetry $\gamma(t; g, f) = \gamma(1-t; f, g)$, this is also true for the derivative with respect to the endpoint.

Our proposition for an equivariant estimator starts with \eqref{eq:gradient_p} and \eqref{eq:gradient_v} but replaces the metric-dependent adjoint with the weighted connection-dependent inverted differential. Both operators are intimately related via Jacobi fields~\cite[Sections~3-4]{BergmanGousenbourger2018}, but the inverted differential does not depend on a given metric.

\begin{definition} \label{def:bi_inv_estiamtor}
Let $U \subseteq G$ be a normal convex neighborhood. Furthermore, let $(f_1, t_1), \dots, (f_N, t_N) \in U \times [0,1]$. The \textit{bi-invariant estimator} for (\ref{eq:model}) is the geodesic between the points $\vartheta^{\textnormal{BI}}( (f_i, t_i \big)_{i=1}^N):= (\hat g_0,\hat g_1) \in U^2$ that satisfy
\begin{align}
    \sum_{i=1}^N(1-t_i)^2 \big( \dd_{\hat g_0} \gamma(t_i; \fcdot , \hat g_1) \big)^{-1} \left (\Log_{\gamma(t_i)}(f_i) \right)&= 0 \qquad\textnormal{and} \label{eq:bir1}\\
    \sum_{i =1}^N t_i^2 \big( \dd_{\hat g_1} \gamma(t_i; \hat g_0, \fcdot) \big)^{-1} \left( \Log_{\gamma(t_i)}(f_i) \right) &= 0.\label{eq:bir2}
\end{align}
\end{definition}

A generalization should always coincide with the notion it is motivated by. The following remark shows this is the case for the new estimator: It reduces to the standard least-squares estimator in the Euclidean case.
\begin{remark}
        Denoting the $n$-by-$n$ identity matrix by $I$, we find for $t \in [0,1]$ and $x,y \in \mathbb{R}^n$
        \begin{align*}
            \dd_{x_0}\gamma(t; \fcdot, x_1) &= \dd_{x_0}(z \mapsto z + t(x_1-z)) = (1-t) I \qquad\textnormal{and} \\
            \dd_{x_1}\gamma(t; x_0, \fcdot) &= \dd_{x_1}(z \mapsto x_0 + t(z-x_0)) = t I.
        \end{align*}
        Hence, since the adjoint is the matrix transpose, we find
        \begin{align*}
            \big(\dd_{x_0} \gamma(t;\fcdot,x_1)\big)^* &= (1-t) I = (1-t)^2 \big( \dd_{x_0} \gamma(t;\fcdot,x_1) \big)^{-1} \qquad\textnormal{and} \\
             \big(\dd_{x_1} \gamma(t;x_0,\fcdot)\big)^* &= t I = t^2 \big(\dd_{x_1} \gamma(t;x_0,\fcdot)\big)^{-1}.
        \end{align*}
        It follows that $\vartheta^{\textnormal{BI}} = \vartheta^{\textnormal{LS}}$ in Euclidean space $G = \mathbb{R}^n$ since we look for the (unique) zero of the same system of equations.
    \end{remark}

    We will now show that the new estimator does not suffer from the same problem as the least-squares estimator. To this end, we need the following Lemma:
\begin{lemma}\label{lem:equiv}
    Geodesics $\gamma(\fcdot; g_0, g_1)$ are equivariant under joint translations of their endpoints $g_0$ and $g_1$; that is, for all $t \in [0,1]$ and $h \in G$,
\begin{align*}
    \gamma(t;  \Tt_h(g_0, g_1)) &= \Tt_h\left( \gamma(t; g_0, g_1) \right), \qquad \Tt_h \in \{ \Lt_h, \Rt_h \}.
\end{align*}
\end{lemma}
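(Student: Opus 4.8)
The plan is to exploit the two explicit representations of CCS geodesics recorded in~\eqref{eq:geodesic}, namely $\gamma(t; g_0, g_1) = g_0\exp\!\big(t\log(g_0^{-1}g_1)\big) = \exp\!\big(t\log(g_1 g_0^{-1})\big) g_0$, and to treat the two cases $\Tt_h = \Lt_h$ and $\Tt_h = \Rt_h$ separately, using for each whichever of the two formulas makes the relevant translation factor cancel. Everything then reduces to an elementary algebraic manipulation inside the group exponential, together with a short remark on domains.

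Concretely, for a left translation I would substitute $g_0 \mapsto hg_0$ and $g_1 \mapsto hg_1$ into the first expression: the argument of the logarithm becomes $(hg_0)^{-1}(hg_1) = g_0^{-1}h^{-1}hg_1 = g_0^{-1}g_1$, so it is literally unchanged, and factoring the remaining $h$ out on the left gives $\gamma(t; hg_0, hg_1) = hg_0\exp\!\big(t\log(g_0^{-1}g_1)\big) = h\,\gamma(t; g_0, g_1)$. For a right translation I would symmetrically use the second expression with $g_0 \mapsto g_0 h$ and $g_1 \mapsto g_1 h$: now $(g_1 h)(g_0 h)^{-1} = g_1 h h^{-1} g_0^{-1} = g_1 g_0^{-1}$, again unchanged, and factoring $h$ out on the right yields $\gamma(t; g_0 h, g_1 h) = \gamma(t; g_0, g_1)h$. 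Together these are exactly the asserted identity $\gamma(t; \Tt_h(g_0,g_1)) = \Tt_h(\gamma(t; g_0, g_1))$ for $\Tt_h \in \{\Lt_h, \Rt_h\}$.

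Two points deserve a comment rather than real work. First, \eqref{eq:geodesic} is only valid when $g_1$ lies in the maximal normal convex neighborhood of $g_0$, so that $\log(g_0^{-1}g_1)$ is defined; but the computation above shows that the argument of $\log$ is invariant under the joint translation, so the translated pair automatically stays in the admissible range — equivalently, since the CCS connection is both left- and right-invariant, $\Lt_h$ and $\Rt_h$ are affine transformations and hence map normal convex neighborhoods to normal convex neighborhoods and geodesics to geodesics. Second, it is essential that \emph{both} endpoints are moved by the \emph{same} $h$; this is precisely what produces the cancellation $h^{-1}h$ (resp.\ $hh^{-1}$), and is why the statement concerns joint translations. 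I do not anticipate a genuine obstacle: the content is a one-line cancellation in each of the two cases, and the only subtlety is to pair the correct representation of $\gamma$ from~\eqref{eq:geodesic} with each type of translation and to observe that the same derivation disposes of the domain question.
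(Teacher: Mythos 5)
Your proposal is correct and takes essentially the same route as the paper: both cases follow by substituting the jointly translated endpoints into the appropriate left- or right-translation form of the geodesic from~\eqref{eq:geodesic}, observing that the argument of $\log$ is unchanged, and factoring out $h$. Your additional remark on domains is a welcome (if brief) elaboration the paper leaves implicit.
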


\begin{proof} 
    Because of the characterization of
    geodesics with left translations in~\eqref{eq:geodesic}, we find
    \begin{align*}
        \gamma(t; hg_0, hg_1)
        = hg_0 \exp \left( t\log \left(g_0^{-1} g_1 \right)  \right)
        = h\gamma(t;g_0, g_1).
    \end{align*}
    The proof works analogously for right translations using the characterization of geodesics with right translations in \eqref{eq:geodesic}.
\end{proof}

    We also need the following facts:
    Clearly, $(\Lt_g)^{-1} = \Lt_{g^{-1}}$ and $(\Rt_g)^{-1} = \Rt_{g^{-1}}$; hence, for all $g,h \in G$,
    \begin{equation} \label{eq:diff_inverse}
        (\diff_g \Lt_h)^{-1} = \diff_{hg} \Lt_{h^{-1}} \qquad \textnormal{and} \qquad (\diff_g \Rt_h)^{-1} = \diff_{gh} \Rt_{h^{-1}}.
    \end{equation}
    Furthermore, translations also induce \textit{joint} translations on the product group\footnote{A product of Lie groups is also a Lie group with operations working entry-wise.} $G^m$ by $\Lt_h((g_i)_{i=1}^m) := ((\Lt_h(g_i))_{i=1}^m)$ and $\Rt_h((g_i)_{i=1}^m) := ((\Rt_h(g_i))_{i=1}^m)$, which are also automorphisms. Note that we use the same symbols $\Lt$ and $\Rt$ for translations and (real) joint translations.
    We are now ready for the following theorem:
    \begin{theorem} \label{thm:main}
    Let $(f_1, t_1), \dots, (f_N, t_N) \in U \times \mathbb{R}$. Then, for all $h \in G$,
    \begin{align*}
        \vartheta^{\textnormal{BI}} \left( \big(\Tt_h(f_i), t_i \big)_{i=1}^N \right) &= \Tt_h \Big(\vartheta^{\textnormal{BI}} \left((f_i, t_i)_{i=1}^N \right) \Big), \qquad \Tt_h \in \{ \Lt_h, \Rt_h \}.
    \end{align*} 
    \end{theorem}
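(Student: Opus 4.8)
The plan is to verify the defining equations \eqref{eq:bir1}--\eqref{eq:bir2} of $\vartheta^{\textnormal{BI}}$ directly for the translated data, starting from the hypothesis that $(\hat g_0, \hat g_1)$ solves them for the original data, and showing that $\Tt_h(\hat g_0,\hat g_1) = (\Tt_h\hat g_0, \Tt_h\hat g_1)$ solves them for the data $(\Tt_h f_i, t_i)_i$. Since the bi-invariant estimator is defined as \emph{the} pair satisfying this system (within the normal convex neighborhood, which $\Tt_h$ maps to another such neighborhood since translations are automorphisms), establishing that the translated candidate is a solution of the translated system proves equivariance.

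**Key steps.** First I would fix $\Tt_h = \Lt_h$ (the right-translation case being verbatim analogous, using the second characterizations in \eqref{eq:geodesic}, \eqref{eq:Log}, \eqref{eq:diff_inverse}). By Lemma~\ref{lem:equiv}, the regression geodesic through the translated data evaluated at $t_i$ is $\gamma(t_i;\Lt_h\hat g_0,\Lt_h\hat g_1) = \Lt_h\big(\gamma(t_i;\hat g_0,\hat g_1)\big)$, which I abbreviate as $\Lt_h\gamma(t_i)$. Next I would chase the three $h$-dependent factors appearing in the $i$-th summand of \eqref{eq:bir1}:
\begin{enumerate}
\item The logarithm term: from \eqref{eq:Log}, or more cleanly from the fact that $\Lt_h$ is an isometry-free automorphism commuting with $\Exp$/$\Log$ (since it maps geodesics to geodesics by Lemma~\ref{lem:equiv}), one gets $\Log_{\Lt_h\gamma(t_i)}(\Lt_h f_i) = \dd_{\gamma(t_i)}\Lt_h\big(\Log_{\gamma(t_i)}(f_i)\big)$.
\item The differential of the start-point map: differentiating the identity $\gamma(t_i;\Lt_h(\fcdot),\Lt_h\hat g_1) = \Lt_h\big(\gamma(t_i;\fcdot,\hat g_1)\big)$ at $\hat g_0$ via the chain rule, and using that $\dd_{g}\Lt_h$ is a linear isomorphism, yields
\[
\dd_{\Lt_h\hat g_0}\gamma(t_i;\Lt_h(\fcdot),\Lt_h\hat g_1) \circ \dd_{\hat g_0}\Lt_h = \dd_{\gamma(t_i)}\Lt_h \circ \dd_{\hat g_0}\gamma(t_i;\fcdot,\hat g_1).
\]
\item Inverting the previous display and composing with the transported log term, the map $\dd_{\gamma(t_i)}\Lt_h$ cancels, leaving
\[
\big(\dd_{\Lt_h\hat g_0}\gamma(t_i;\Lt_h(\fcdot),\Lt_h\hat g_1)\big)^{-1}\big(\Log_{\Lt_h\gamma(t_i)}(\Lt_h f_i)\big) = \dd_{\hat g_0}\Lt_h\Big(\big(\dd_{\hat g_0}\gamma(t_i;\fcdot,\hat g_1)\big)^{-1}\big(\Log_{\gamma(t_i)}(f_i)\big)\Big).
\]
\end{enumerate}
Finally I would pull the fixed linear map $\dd_{\hat g_0}\Lt_h$ out of the weighted sum over $i$ (weights $(1-t_i)^2$ are unchanged since the $t_i$ are untouched): the right-hand side becomes $\dd_{\hat g_0}\Lt_h$ applied to the original left-hand side of \eqref{eq:bir1}, which vanishes by assumption. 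Hence the translated system's first equation holds; the same argument with weights $t_i^2$ and the endpoint differential handles \eqref{eq:bir2}. Therefore $(\Lt_h\hat g_0,\Lt_h\hat g_1)$ satisfies the defining system for the translated data, i.e.\ $\vartheta^{\textnormal{BI}}\big((\Lt_h f_i,t_i)_i\big) = \Lt_h\big(\vartheta^{\textnormal{BI}}((f_i,t_i)_i)\big)$.

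**Main obstacle.** The routine parts are the weight factorization and the bookkeeping of which differential lives on which tangent space. The one point needing genuine care is step~1: justifying that $\Log$ and $\dd\Lt_h$ commute. The slick route is to note that, being a Lie-group automorphism, $\Lt_h$ maps CCS-geodesics to CCS-geodesics (this is exactly Lemma~\ref{lem:equiv}), hence maps normal convex neighborhoods to normal convex neighborhoods and intertwines $\Exp_g$ with $\Exp_{\Lt_h g}$; differentiating $\Exp_{\Lt_h g}\circ\,\dd_g\Lt_h = \Lt_h\circ\Exp_g$ and inverting gives the needed $\Log$ identity on $\Lt_h(U)$. Alternatively one reads it off directly from \eqref{eq:Log} using $\dd_e\Lt_{hg} = \dd_{g}\Lt_h\circ\dd_e\Lt_g$ and $(hg)^{-1}(\Lt_h f) = g^{-1}f$. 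I would include whichever is shorter; either way this is the conceptual heart, and everything else is chain rule plus linearity.
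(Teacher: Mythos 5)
Your proposal is correct and follows essentially the same route as the paper's proof: differentiate the geodesic equivariance of Lemma~\ref{lem:equiv} via the chain rule, transport the logarithm term using \eqref{eq:Log} (equivalently, the intertwining of $\Exp$ with the automorphism $\Lt_h$), cancel the intermediate differential, pull the fixed linear isomorphism $\dd\Lt_h$ out of the weighted sum, and conclude from its invertibility that the defining system vanishes at the translated candidate. The only cosmetic difference is that the paper packages both equations \eqref{eq:bir1}--\eqref{eq:bir2} into a single ``force'' vector on the product group $G^2$, whereas you treat them componentwise.
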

\begin{proof}
    In the following, we consider the geodesic boundary map $\gamma_t: U^2 \to U$, $(g_0, g_1) \mapsto \gamma(t; g_0, g_1)$, to avoid always writing two cases. (Note that $t$ appears as a subscript in this proof as it can be interpreted as arbitrary but fixed.) Remember that $T_{(g_0,g_1)}G^2 \cong T_{g_0}G \times T_{g_1}G$. Let $s: [0,1] \times T_{(g_0,g_1)}G^2 \to T_{(g_0,g_1)}G^2, (t, (v, w)) \mapsto s_t(v, w) := ((1-t) ^2v, t^2w)$, be multiplication of tangent vectors of $G^2$ with the scalars from Definition~\ref{def:bi_inv_estiamtor}.
    
    We set $b_i=\gamma_{t_i}(g_0,g_1)$ for $i=1,\dots,n$. Our goal is to show that the vector
    \begin{equation} \label{eq:force_2}
        v \left(g_0, g_1, (f_i)_{i=1}^N \right) := \sum_{i=1}^N s_{t_i} \left( \diff_{(g_0, g_1)} \gamma_{t_i} \right)^{-1} \left( \Log_{b_i} (f_i)\right) \in T_{(g_0, g_1)}G^2
    \end{equation}
    fulfills $v \circ \Lt_h = d \Lt_h \circ v$ as a function on $U^{N+2}$.  
    The derivative $d_{(g_0, g_1)} \Lt_h$ being an invertible, linear operator (with inverse $d_{(hg_0, hg_1)} \Lt_{h^{-1}}$) at each $(g_0,g_1) \in G^2$ then implies that $v(g_0, g_1, (f_i)_{i=1}^N) = 0$ if and only if $v(hg_0, hg_1, (hf_i)_{i=1}^N) = 0$.

    From Lemma~\ref{lem:equiv}, we get for all $t$,
    $$\gamma_t \circ \Lt_h = \Lt_h \circ \gamma_t;$$
    so differentiating at $(g_0, g_1)$ and applying the chain rule\footnote{Let $M_1, M_2, M_3$ be smooth manifolds and $F: M_1 \to M_2$, $H: M_2 \to M_3$ smooth. The chain rule yields the differential $\diff_p(H \circ F) = \diff_{F(p)} H \circ \diff_pF$ where $(H \circ F)(p)$ is defined.} yields
    $$\diff_{(hg_0, hg_1)} \gamma_t \circ \diff_{(g_0, g_1)} \Lt_h = \diff_{\gamma_t(g_0, g_1)} \Lt_h \circ \diff_{(g_0, g_1)} \gamma_t.$$
    Together with~\eqref{eq:diff_inverse}, this implies
    \begin{equation} \label{eq:inv_gam_trans}
        \left( \diff_{(hg_0, hg_1)} \gamma_t \right)^{-1} = \diff_{(g_0, g_1)} \Lt_h \circ \left( \diff_{(g_0, g_1)} \gamma_t \right)^{-1} \circ \diff_{h\gamma_t(g_0, g_1)} \Lt_{h^{-1}}.
    \end{equation}
    Combining \eqref{eq:force_2}, \eqref{eq:Log}, and \eqref{eq:inv_gam_trans} gives
    \begin{align*}
        (v &\circ \Lt_h) \left(g_0, g_1, (f_i)_{i=1}^N \right) \\
        &= \sum_{i=1}^N s_{t_i} \left( \diff_{(hg_0, hg_1)} \gamma_{t_i} \right)^{-1} \left( \Log_{ hb_i } (hf_i)\right)\\
            &= \sum_{i=1}^N s_{t_i} \Big( \diff_{(g_0, g_1)} \Lt_h \circ \left( \diff_{(g_0, g_1)} \gamma_{t_i} \right)^{-1} \circ \diff_{hb_i} \Lt_{h^{-1}} \circ \diff_e \Lt_{hb_i} \Big) \Big( \log \big(b_i^{-1} f_i \big) \Big).
    \end{align*}
    We now apply the chain rule again and, pull the first differential out of the sum. The latter can be done since there is no cross-talk between the two copies of $G$ under joint translations. We thereby obtain
    \begin{align*}
        (v &\circ \Lt_h) \left(g_0, g_1, (f_i)_{i=1}^N \right) \\
        &= \diff_{(g_0, g_1)} \Lt_h \bigg( \sum_{i=1}^N \Big( s_{t_i} \left( \diff_{(g_0, g_1)} \gamma_{t_i} \right)^{-1} \circ \diff_e \left( \Lt_{h^{-1}} \circ \Lt_{hb_i} \right) \Big) \Big( \log \big(b_i^{-1} f_i \big) \Big) \bigg)\\
        &= \diff_{(g_0, g_1)} \Lt_h \bigg( \sum_{i=1}^N \Big( s_{t_i} \left( \diff_{(g_0, g_1)} \gamma_{t_i} \right)^{-1} \circ \diff_e\Lt_{b_i} \Big) \Big(\log \big(b_i^{-1} f_i \big) \Big) \bigg) \\
        &= \diff_{(g_0, g_1)} \Lt_h \bigg( \sum_{i=1}^N \Big( s_{t_i} \left( \diff_{(g_0, g_1)} \gamma_{t_i} \right)^{-1} \Big) \Big(\Log_{b_i} (f_i) \Big) \bigg) \\
        &= \left( \diff_{(g_0, g_1)} \Lt_h \circ v \right) \left(g_0, g_1, (f_i)_{i=1}^N \right),
    \end{align*}
    which is what we wanted to show.
    
    The proof for right translation works analogously by replacing left with right translations everywhere.  
\end{proof}
The theorem shows that the novel estimator is indeed equivariant under left and right translations. This and the fact that it coincides with linear regression in Euclidean space make it a natural choice for regression in Lie groups.

\section{Computation}
Note that the components of the vector $v(g_0, g_1, (f_i)_{i=1}^N)$ in the proof act like a ``net-force'', which the data points apply (through the geodesic) to $g_0$ and $g_1$. Therefore, we propose the following algorithm to approximate solutions of \eqref{eq:bir1} and \eqref{eq:bir2}: Starting with initial guesses $\hat g_0 ^{(0)}$ and $\hat g_1 ^{(0)}$, we use the iteration
\begin{align*}
    \hat g_j^{(k+1)} &:= \Exp_{\hat g_j^{(k)}} \left( \lambda v_j \right) \quad \textnormal{with} \quad (v_0, v_1) := v \left(\hat g_0^{(k)}, \hat g_1^{(k)}, (f_i)_{i=1}^N \right).
\end{align*}
with a tuned stepsize $\lambda$; convergence can be determined, e.g., by observing the length of the update vector $v$ according to some auxiliary norm. Similar algorithms have been used with great success. Indeed, the algorithm for computing the equivariant mean of Pennec and Arsigny~\cite[Algorithm~1]{PennecLorenzi2020} can be seen as a special case, in which only constant geodesics are considered.

In each step, we must compute the vectors $v_0$ and $v_1$ at the current iterate. Luckily, we need not explicitly invert the differential operators in~\eqref{eq:bir1} and \eqref{eq:bir2} for this. Consider a general geodesic $\gamma$ in $U$ to see this. When viewed as a function in $t$ and with $v \in T_{g_0}G$,
$$J: t \mapsto \dd_{g_0} \gamma \left(t; \fcdot , g_1 \right) (v)$$
is a so-called Jacobi (vector) field along the geodesic $\gamma (\fcdot; g_0, g_1)$ that fulfills the Jacobi equation---a linear second-order differential equation~\cite{BergmanGousenbourger2018}. Importantly, $J(0)=v$ and $J(1)=0$. A Jacobi field also encodes the derivative with respect to the endpoint because of the symmetry of geodesics under the exchange of the start and endpoint~\cite[Section~3.1]{BergmanGousenbourger2018}.

Let now $w:= \dd_{g_0} \gamma \left(t_0; \fcdot , g_1 \right) (v)$ for some $t_0 \in (0,1)$. Then, $v$ can be determined from $w$ by finding the Jacobi field $J$ and evaluating at $t = 0$. To this end, set $f := \gamma(t_0; g_0, g_1)$. The linear reparametrization $s(t) := \nicefrac{(1-t)}{(1-t_0)}$  yields the geodesic $\eta: [0,\nicefrac{1}{(1-t_0)}] \to U, s \mapsto \gamma(s;g_1,f)$. Note that $\eta(\nicefrac{1}{(1-t_0)}) = g_0$. The Jacobi field
$$\widetilde{J}: t \mapsto \dd_{f} \eta \left(t; g_1, \fcdot \right) (w)$$
fulfills $\widetilde{J}(0) = 0$ and $\widetilde{J}(1)=w$.
Therefore, we have two Jacobi fields $J$ and $\widetilde{J}$ along the same---albeit differently parametrized---geodesic that coincide at $f = \gamma(t_0) = \eta(1)$ and $g_1 = \gamma(1) = \eta(0)$.  

Solutions to the Jacobi equation are well-known to be equivariant under reflections and linear reparametrizations. Therefore, and
since solutions to second-order equations are uniquely determined by the values at two different points, we have $\widetilde{J}((1-t)/(1-t_0)) = J(t)$ for all $t \in [0,1]$. In particular, it follows that
$$\left(\dd_{g_0} \gamma \left(t_0; \fcdot , g_1 \right) \right)^{-1} (w) = \dd_f \eta \left(\frac{1}{1-t_0}; g_1, \fcdot \right) (w) = \dd_f \gamma \left(\frac{1}{1-t_0}; g_1, \fcdot \right) (w).$$
The right-hand side can be efficiently computed using, e.g., automatic differentiation on \eqref{eq:geodesic} and projecting the result to $T_{g_0}G$. 

For differentials with respect to endpoints, everything works analogously when replacing $(1-t)$ by $t$.

\section{Experiments}

In the following, we evaluate the proposed regression scheme experimentally both for a synthetic example and in an application to knee joint configurations during the progression of osteoarthritis.

Throughout this section, the dependent variables will be elements in the Lie group $\SE(3)$ of rigid-body motions.
We denote the group of 3-by-3 rotation matrices and their identity matrix by $\SO(3)$ and $I$, respectively. Remember that $\SE(3) = \SO(3) \ltimes \mathbb{R}^3$ is a semidirect product with the group operation
$(R, x)(Q, y) = (RQ, x+Ry)$. Its tangent space $\mathfrak{se}(3)$ at the identity element $(I, 0)$ is the product of the set of skew-symmetric 3-by-3 matrices and $\mathbb{R}^3$. The inverse of an element $(R, x)$ is given by $(R^\intercal, - R^\intercal x)$. As the name suggests, $\SE(3)$ consists of the motions a rigid body can perform in 3-space. Naturally, it appears in many real-world applications. It is well known that $\SE(3)$ does \textit{not} possess a bi-invariant metric~\cite{zefran1999metrics}. Screw motions are 1-parameter subgroups and, thus, the geodesics of the CCS connection.

\subsection{Simulation study} \label{subsec:simulation_study}

We start with a collection of synthetic experiments demonstrating the equivariance of our proposed scheme on the one hand and, on the other, juxtaposing it to the variance of estimators based on the Riemannian framework. In particular, we compare our algorithm to the Riemannian geodesic regression by Fletcher~\cite{Fletcher2013}. 
To this end, we equip the group of rigid motions with a Riemannian structure based on the product of the standard bi-invariant metrics of $\SO(3)$ and $\R^3$---a frequent choice in the literature~\cite{belta2002euclidean,zefran1999metrics}.

We simulated data according to model (\ref{eq:model}) by determining a geodesic in terms of a random start and endpoint and, subsequently, perturbing it at $N=10$ equidistant time points in the unit interval.
To generate random points and vectors we pseudo-randomly sample centered normal distributions in $\mathfrak{se}(3)$ with isotropic variance of 1 and 0.01, respectively. Based on these samples, tangent vectors are obtained via left-translation, whereas points are computed via the group exponential.
To avoid a bias towards a specific geometric structure in our comparison, we use the random samples to derive two datasets from model (\ref{eq:model}): The first ($\mathcal{D}_{\textnormal{CCS}}$) using the CCS connection for exponentiation and geodesic interpolation, while the second ($\mathcal{D}_{\textnormal{LC}}$) is based on the Levi-Civita connection.

\begin{figure}
         \centering
         \includegraphics[width=.8\linewidth]{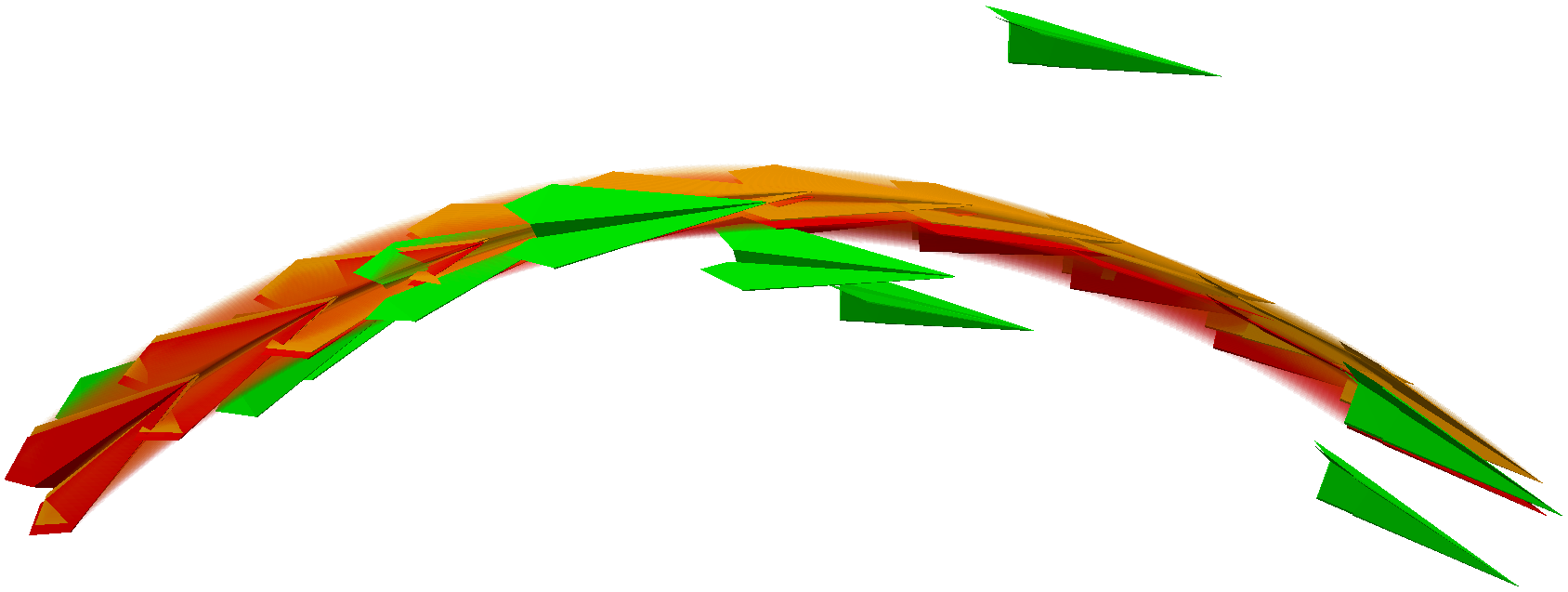}
         \caption{Equivariance of bi-invariant regression under symmetries visualized by the action of SE(3)-valued data on a reference object (paper plane).
         Green: $\Rt_h(\mathcal{D}_{\textnormal{CCS}})$, that is, sample data $\mathcal{D}_{\textnormal{CCS}}$ after right translation with random $h$; Red: Estimated geodesic by bi-invariant regression on $\Rt_h(\mathcal{D}_{\textnormal{CCS}})$; Orange: $\Rt_h(\gamma)$, that is, the geodesic $\gamma$ from bi-invariant regression on $\mathcal{D}_{\textnormal{CCS}}$ right-translated by $h$.}
         \label{fig:biinvariant_regression}
\end{figure}

In Figure~\ref{fig:biinvariant_regression}, we show estimates of the geodesic trend underlying $\mathcal{D}_{\textnormal{CCS}}$ as determined by our proposed regression scheme (with stepsize $\lambda=0.1$) after right translation by a random motion $h$.
Alongside the translated dataset $\Rt_h(\mathcal{D}_{\textnormal{CCS}})$, this figure shows the translated estimate on the untransformed data in orange together with the estimate for $\Rt_h(\mathcal{D}_{\textnormal{CCS}})$.
While both estimates show a high degree of agreement, we observe a minor difference that can be attributed to finite precision arithmetic.

Next, we examine the same construction for Riemannian geodesic regression with the dataset $\mathcal{D}_{\textnormal{LC}}$ using the same random $h$ for the right translation as in the previous example.
In Figure~\ref{fig:riemannian_regression}, we again show the estimate for the translated data and the translated estimate for the original data in red and orange, respectively.
Both curves are quite different illustrating the lack of equivariance of the Riemannian estimator.
In particular, we can observe a bending of the data and estimator under translation causing a violation of geodesicity in the Riemannian sense.
A na\"ive approach to obtain a Riemannian geodesic via translation would be to apply $\Rt_h$ to the parametrization, that is, the endpoints.
The resulting geodesic---shown in yellow---is far from the estimator for $\Rt_h(\mathcal{D}_{\textnormal{LC}})$ with a goodness of fit decreasing with the distance to the endpoints.

\begin{figure}
         \centering
         \includegraphics[width=.8\linewidth]{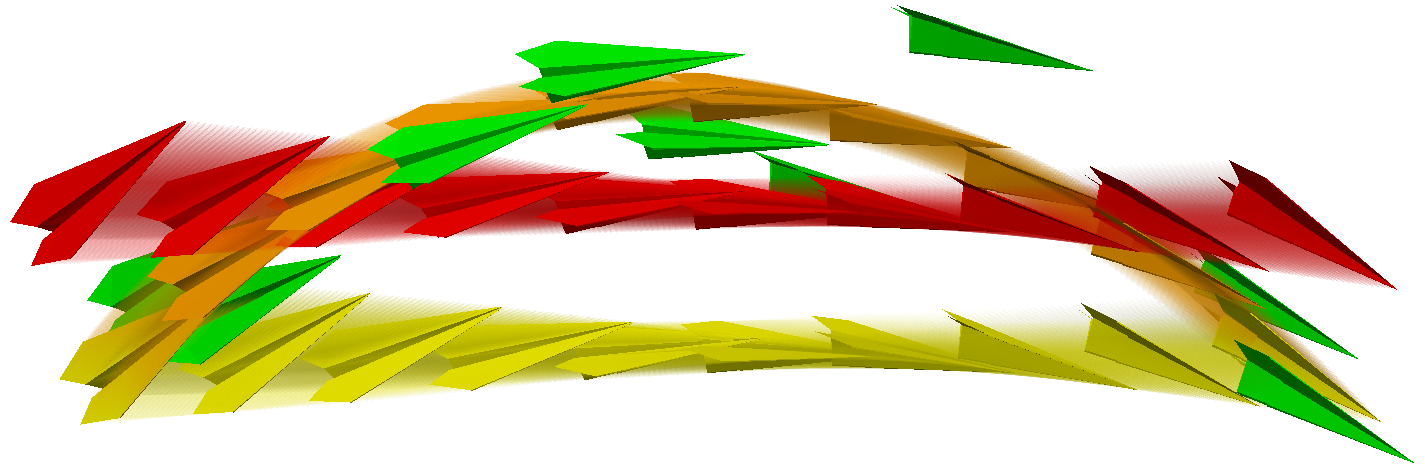}
         \caption{Variance of Riemannian regression under symmetries visualized by the action of SE(3)-valued data on a reference object (paper plane).
         Green: $\Rt_h(\mathcal{D}_{\textnormal{LC}})$, that is, sample data $\mathcal{D}_{\textnormal{LC}}$ after right translation with random $h$; Red: Estimated geodesic by Riemannian regression on $\Rt_h(\mathcal{D}_{\textnormal{LC}})$; Orange: $\Rt_h(\gamma)$, that is, the geodesic $\gamma$ from Riemannian regression on $\mathcal{D}_{\textnormal{LC}}$ right-translated by $h$; Yellow: Geodesic connecting endpoints of $\Rt_h(\gamma)$.}
         \label{fig:riemannian_regression}
\end{figure}

We further provide a quantitative assessment of the deviation from Riemannian geodesicity under the right translation and hence diminishing adequacy of the Riemannian geodesic regression.
To this end, we draw 100 random motions as before but with a variance of 100.
For each of the motions, we right-translate the dataset $\mathcal{D}_{\textnormal{LC}}$ and perform Riemannian geodesic regression.
We then assess the goodness of fit for the estimator in terms of the Riemannian coefficient of determination~\cite{Fletcher2013} denoted $R^2$.
For the untranslated dataset, the estimated Riemannian geodesic achieves a fit of $R^2 = 0.93$.
A histogram showing the distribution of this coefficient under random right translation is given in Figure~\ref{fig:R2_hist}.
Note that, while in most cases the Riemannian model can fit the translated data well (relative to the untranslated case), we can observe instances that are very poorly replicated.
This result implies that the Riemannian geodesic model can be arbitrarily unfavorable even if a reference frame exists for which the model assumption of Riemannian geodesicity applies.

\begin{figure}[tb]
    \centering
    \includegraphics[width=0.7\linewidth]{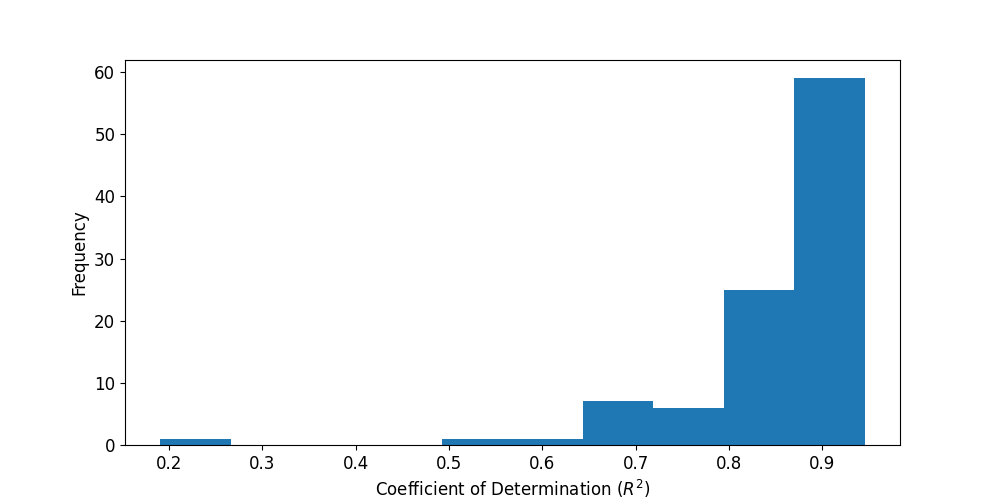}
    \caption{Histogram of $R^2$ values for Riemannian estimators under random translation.}
    \label{fig:R2_hist}
\end{figure}

\subsection{Skeletal configurations of the knee joint}
A possible application of bi-invariant geodesic regression is studying the progression of osteoarthritis (OA) in knee joints.
OA is a degenerative disease of the joints. It is characterized by a loss of cartilage leading, among others, to a narrowing of the joint space between the femur and tibia. A method of quantifying OA is the \emph{Kellgren-Lawrence (KL) grade}, which assigns integers between 0 and 4 to indicate the severity~\cite{Kellgren1957}.
In this experiment, we describe the relative position of the femur and tibia by a rigid-body transformation as in \cite[Section~4.1]{Hanik2022}. We then regress the relative positions against OA grades to (re)discover the joint space narrowing. 

Our data set stems from 50 subjects (10 per grade) randomly selected from the Osteoarthritis Initiative (a longitudinal, prospective study of knee OA) for which femur and tibia segmentations of the respective magnetic resonance images and KL grades are publicly available.\footnote{See~\cite{Ambellan_ea2019_kneeData}; the data can be found at \href{https://doi.org/10.12752/4.ATEZ.1.0}{https://doi.org/10.12752/4.ATEZ.1.0}.} From these segmentations, triangular meshes were extracted;
in a supervised post-process, the quality of segmentations was ensured.

For each femur and tibia mesh, a \emph{frame} $(R, x) \in \SE(3)$ is computed to represent its location and orientation in space: The center of gravity of the vertex set is picked as the origin $x$; the 3 positively oriented, normalized principal directions obtained by principal component analysis for the vertices are chosen as orientation. With this setup, we can express the relative position of the tibia frame $F^{(i)}_{\textnormal{T}} = (R^{(i)}_{\textnormal{T}},x^{(i)}_{\textnormal{T}})$ of the $i$-th subject to its femur frame $F^{(i)}_{\textnormal{F}} = (R^{(i)}_{\textnormal{F}}, x^{(i)}_{\textnormal{F}})$ by the rigid-body motion
$$ M^{(i)}:=F^{(i)}_{\textnormal{F}}(F^{(i)}_{\textnormal{T}})^{-1} = \left( R^{(i)}_{\textnormal{F}}(R^{(i)}_{\textnormal{T}})^\intercal, x^{(i)}_{\textnormal{F}} - R^{(i)}_{\textnormal{F}}(R^{(i)}_{\textnormal{T}})^\intercal x_{\textnormal{T}} \right).$$
The construction principle is depicted on the left of Figure~\ref{fig:knee}.

Assuming equidistant KL grades, we choose $t = 0, \nicefrac{1}{4}, \nicefrac{1}{2}, \nicefrac{3}{4}, 1$ to represent the grades 0, 1, 2, 3, and 4, respectively, and assign the corresponding value $t_i$ to each subject. We now compute $\vartheta^{\textnormal{BI}}((M_i, t_i)_{i=1}^{50})$.
To this end, the geodesic start and endpoints are initialized with a sample of a subject with KL grade 0 and 4, respectively; a stepsize $\lambda = 0.01$ is used.
Evaluating the obtained geodesic at $t = 0, \nicefrac{1}{4}, \nicefrac{1}{2}, \nicefrac{3}{4}, 1$ gives the ``denoised'' knee configurations $\hat{M}^{(j)} = (\hat{R}^{(j)}, \hat{x}^{(j)})$ for the 5 KL grades. 

\begin{figure}
    \includegraphics[width=.9\linewidth]{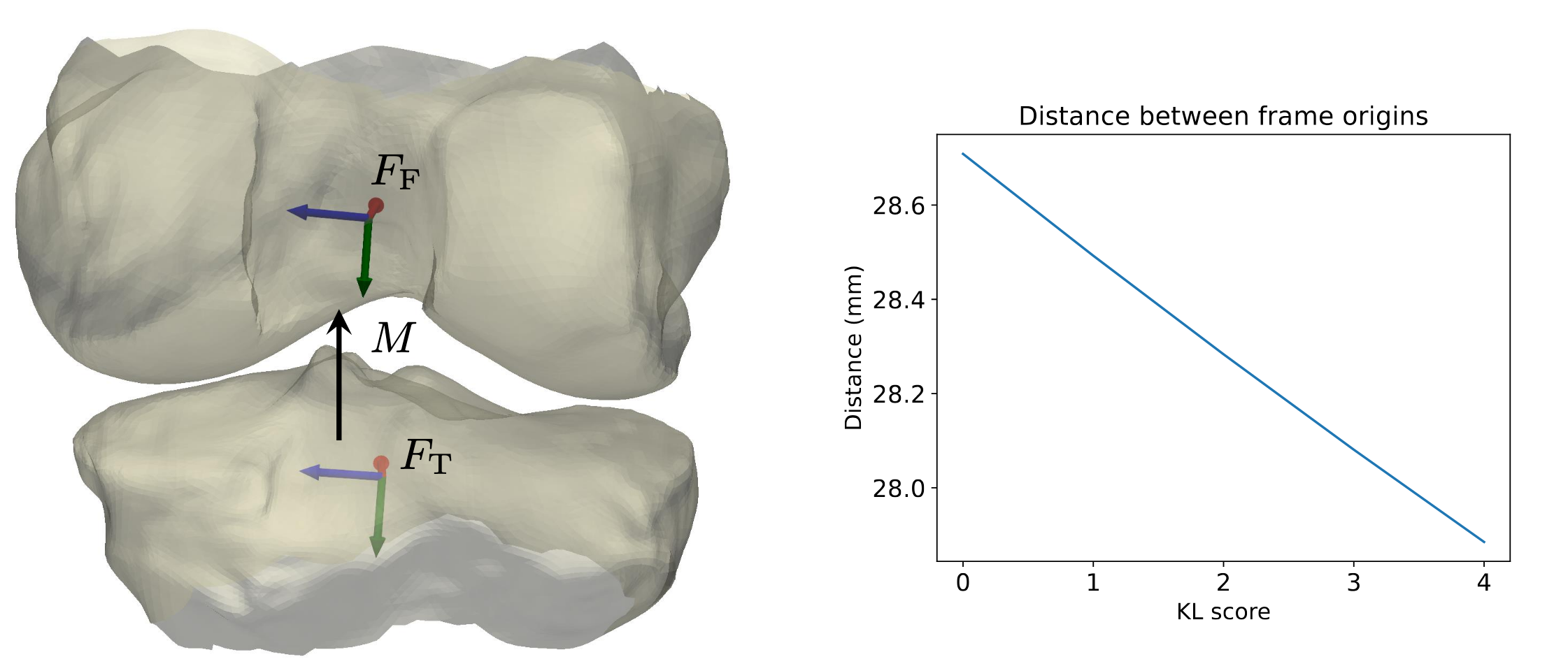}
    \caption{Left: Knee-constituting parts of the tibia (bottom) and femur (top) with their respective frames. The rigid motion $M$ moves the former's frame onto the latter's. Right: Euclidean norm of the translational part of the estimator $\vartheta^{\textnormal{BI}}((M_i, t_i)_{i=1}^{50})$ plotted against the KL grade.}
    \label{fig:knee}
\end{figure}

Using the Euclidean norm on the translations $\hat{x}^{(j)}$, we can measure the distance between the origins for the estimated geodesic; the results are shown on the right of Figure~\ref{fig:knee}.
We can see that the centers of gravity of the femur and tibia move towards each other with increasing KL grade showing the narrowing of the joint space under the progression of OA. The magnitude of the decrease in width is comparable to what is reported in the literature; see, e.g.,~\cite{RATZLAFF20181215}.\footnote{The decrease reported in~\cite{RATZLAFF20181215} is slightly larger. But their data comes from radiographic images usually taken from standing subjects, whose joints were strained. Our MRI data comes from lying subjects, whose joints were relaxed.}

\begin{credits}
\subsubsection{\ackname} This work was supported by the Federal German Ministry for Education and Research (BMBF) under Grants BIFOLD24B, 01IS14013A-E, 01GQ1115, 01GQ0850, and 031L0207D. We are grateful for the open-access data set of the Osteoarthritis Initiative, which is a public-private partnership comprised of five contracts (N01-AR-2-2258; N01-AR-2-2259; N01-AR-2-2260; N01-AR-2-2261; N01-AR-2-2262) funded by the National Institutes of Health, a branch of the Department of Health and Human Services, and conducted by the OAI Study Investigators. Private funding partners include Merck Research Laboratories; Novartis Pharmaceuticals Corporation, GlaxoSmithKline; and Pfizer, Inc. Private sector funding for the OAI is managed by the Foundation for the National Institutes of Health. This manuscript was prepared using an OAI public use data set and does not necessarily reflect the opinions or views of the OAI investigators, the NIH, or the private funding partners.

\subsubsection{\discintname}

The authors have no competing interests to declare that are
relevant to the content of this article.
\end{credits}

%
%
%
\bibliographystyle{splncs04}
\bibliography{bibliography}
    
\end{document}